\documentclass{article}
\usepackage{spconf,amsmath,graphicx}
\usepackage{amssymb}
\usepackage{amsmath}
\usepackage{amsthm}
\usepackage{cite}
\usepackage{bbm}
\usepackage{algpseudocode}
\usepackage{algorithm,algpseudocode}
\usepackage{algorithmicx}
\usepackage{url}
\usepackage{color}

\newtheorem{theorem}{Theorem}
\newtheorem{lemma}{Lemma}

\title{Optimal Design of Energy-Efficient Cell-Free Massive MIMO: \\ Joint Power Allocation and Load Balancing}
%
\name{Trinh Van Chien, Emil Bj\"ornson, and Erik G. Larsson\thanks{This paper was supported by ELLIIT and CENIIT.}}
\address{Department of Electrical Engineering (ISY), Link\"oping University, SE-581 83 Link\"oping, Sweden\\ {trinh.van.chien, emil.bjornson, erik.g.larsson}@liu.se}

\begin{document}
\ninept
\maketitle
\begin{abstract}
A large-scale distributed antenna system that serves the users by coherent joint transmission is called Cell-free Massive MIMO (multiple input multiple output). For a given user set, only a subset of the access points (APs) is likely needed to satisfy the users' performance demands. To find a flexible and energy-efficient implementation, we minimize the total power consumption at the APs in the downlink, considering both the hardware and transmit powers, where APs can be turned off. Even though this is a non-convex optimization problem, a globally optimal solution is obtained by solving a mixed-integer second-order cone program. We also propose a low-complexity algorithm that exploits group-sparsity in the problem formulation. Numerical results manifest that our optimization framework can greatly reduce the power consumption compared to keeping all APs turned on and only minimizing the transmit powers.
\end{abstract}
\begin{keywords}Cell-free Massive MIMO, total power minimization, sparse optimization, energy efficiency.
\end{keywords}
\vspace{-2mm}

\section{Introduction}

\vspace{-2mm}

Cell-free Massive MIMO (multiple input multiple output) is a new promising wireless technology to deal with the mediocre cell-edge performance of cellular networks \cite{interdonato2019ubiquitous}.
The system consists of many distributed access points (APs) that transmit coherently in the downlink and process the received signals coherently in the uplink \cite{Ngo2017a, Nayebi2017a}, leading to a higher signal-to-noise ratio (SNR) without using more power. The distributed nature gives massive macro-diversity against pathloss and shadow fading. The spectral efficiency (SE) of Cell-free Massive MIMO has been characterized for Rayleigh fading channels and various implementations in \cite{Nayebi2016a,Ngo2017a, Nayebi2017a,bjornson2019making}. The SE was optimized in \cite{Ngo2017a, Nayebi2017a,ngo2018total,bashar2018enhanced} (among others) to achieve max-min fairness.

In this paper, we instead consider that each user has an SE requirement that the system must satisfy to not cause service interruptions. Hence, the goal of the resource allocation is for the system to deliver the required SEs with as low power consumption as possible, leading to maximum energy efficiency. We stress that this approach to energy efficiency maximization is different from \cite{ngo2018total, nguyen2017energy}. In particular, we will consider the possibility to turn off APs that are not needed to serve the current set of users, bearing in mind that each user will mainly be served by its neighboring APs. This is an important feature since Cell-free Massive MIMO systems may have many APs \cite{Ngo2017a, Nayebi2017a}, where the large number is needed to provide consistent coverage but might not be needed at every time instant.

\subsection{Relation to Prior Work}

The problem of AP activation in distributed antenna systems has previously been analyzed for cloud radio access networks \cite{feng2017boost, vu2015energy, han2016survey}. However, these prior works consider slowly fading channels and perfect channel state information (CSI). In contrast, we consider fast fading channels, imperfect CSI, and pilot contamination. Hence, the optimization problems considered in this paper are entirely different from those in previous work, and more useful for implementation.

 \subsection{Contributions}
 
 In this paper, we consider a Cell-free Massive MIMO system with multi-antenna APs and maximum ratio transmission (MRT) in the downlink. We formulate a total power minimization problem where the active APs and transmit power allocation are the variables. This problem is non-convex but can be solved as a mixed-integer second-order cone (SOC) program. Since this approach is too complex for real-time applications,  we develop a lower-complexity solution by exploiting the inherent sparsity in the problem. The numerical results demonstrate that there are scenarios where only a subset of the APs are needed to satisfy the SE requirements and large power reductions can be achieved by turning off the remaining APs.

\textit{Notations:} We use boldface lower-case and upper-case letters to denote vectors and matrices, respectively. The transpose is denoted by the superscript $(\cdot)^T$ and the Hermitian transpose is denoted by $(\cdot)^H$. The expectation operator is $\mathbb{E} \{ \cdot \}$ and $\mathcal{CN}(\cdot, \cdot)$ denotes a circularly symmetric complex Gaussian distribution. The Euclidean norm, $\ell_1$-norm, and $\ell_{p}$-norm of a vector $\mathbf{x}$ is denoted as $\| \mathbf{x} \|$, $\| \mathbf{x} \|_1$, and $\| \mathbf{x} \|_p$, respectively. The cardinality of a set $\mathcal{X}$ is denoted $|\mathcal{X}|$. 

\section{System Model} \label{Sec:SysModel}
We consider a Cell-free Massive MIMO system with $M$ distributed APs and $K$ single-antenna users.
Each AP has $N$ antennas and is connected to a central processing unit (CPU) via unlimited fronthaul links.
We consider the standard block fading model \cite{massivemimobook}, where the channels are fixed and frequency-flat within a coherence interval of $\tau_c$ channel uses. The channel between AP~$m$ and user~$k$ is modeled by uncorrelated Rayleigh fading as
\begin{equation}
\mathbf{h}_{mk} \sim \mathcal{CN} (\mathbf{0}, \beta_{mk} \mathbf{I}_N),
\end{equation}
where $\beta_{mk} \geq 0$ is the large-scale fading coefficient.

The users are served by coherent joint transmission from the APs.
We assume that each user $k$ has a required downlink SE value $\xi_k>0$ [b/s/Hz] that must be satisfied. The users and APs are arbitrarily distributed, thus it is likely that these SE requirements can be fulfilled without using all the APs. The main goal of this paper is to find a subset $\mathcal{A} \subset \{1,\ldots,M\}$ of active APs  and the corresponding transmit powers that satisfy the SE requirements while minimizing the total power consumption.


We consider a time-division duplex (TDD) protocol where each AP $m$ estimates the channels $\{ \mathbf{h}_{mk} : k=1,\ldots,K\}$ from itself to the users by using uplink pilot transmissions. As in \cite{Ngo2017a, Nayebi2017a}, $\tau_p<K$ orthonormal pilot signals are utilized and assigned to the users. We let $\mathcal{P}_k$ denote the subset of users assigned to the same pilot as user~$k$. By following standard methods for channel estimation and downlink precoding \cite{Chien2016b, interdonato2018}, we obtain the following result.

\begin{lemma} \label{lemma:ClosedFormMRC}
Suppose MRT is used for downlink transmission and AP $m$ allocates power $\rho_{mk'}$ to user $k'$, then an ergodic SE of user~$k$ is
\begin{equation} \label{eq:DLRate}
R_k (\{ \rho_{mk} \}, \mathcal{A}) = \left( 1 - \frac{\tau_p}{\tau_c} \right)\log_2 \left(1 + \mathrm{SINR}_k (\{ \rho_{mk} \}, \mathcal{A}) \right),
\end{equation}
where the effective SINR is
\begin{align}  \notag
&\mathrm{SINR}_k (\{ \rho_{mk} \}, \mathcal{A}) = \\ &\frac{N \left( \sum\limits_{ m \in \mathcal{A}} \sqrt{\rho_{mk} \gamma_{mk}}  \right)^2 }{ N \sum\limits_{k' \in \mathcal{P}_k \setminus \{k\} } \left( \sum_{m \in \mathcal{A}} \sqrt{\rho_{mk'}\gamma_{mk}} \right)^2 + \sum\limits_{k' =1}^K   \sum\limits_{m \in \mathcal{A}} \rho_{mk'} \beta_{mk} +\sigma^2},\label{eq:SINRk}
\end{align}
the noise variance is $\sigma^2$, $p_k$ is the uplink pilot power of user $k$, and the mean-square of the channel estimates between AP $m$ and user $k$ is
\begin{equation}
\gamma_{mk} =  \frac{ \tau_p p_k  \beta_{mk}^2 }{ \tau_p \sum_{k' \in \mathcal{P}_k}  p_{k'}\beta_{mk'} + \sigma^2 }.
\end{equation}
\end{lemma}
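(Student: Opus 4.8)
\textit{Proof plan.} The plan is to apply the standard use-and-then-forget (channel-hardening) capacity bound from the Massive MIMO literature. First I would fix the active set $\mathcal{A}$ and write the signal received by user~$k$ as $y_k = \sum_{m \in \mathcal{A}} \mathbf{h}_{mk}^H \mathbf{w}_{mk} s_k + \sum_{k' \neq k} \sum_{m \in \mathcal{A}} \mathbf{h}_{mk}^H \mathbf{w}_{mk'} s_{k'} + n_k$, where the $s_{k'}$ are independent unit-power data symbols, $n_k \sim \mathcal{CN}(0,\sigma^2)$, and the MRT precoder $\mathbf{w}_{mk'} = \sqrt{\rho_{mk'}}\, \hat{\mathbf{h}}_{mk'}/\sqrt{\mathbb{E}\{\|\hat{\mathbf{h}}_{mk'}\|^2\}}$ is scaled so that $\mathbb{E}\{\|\mathbf{w}_{mk'}\|^2\} = \rho_{mk'}$. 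Adding and subtracting $\mathbb{E}\{\sum_{m \in \mathcal{A}}\mathbf{h}_{mk}^H \mathbf{w}_{mk}\} s_k$, treating the resulting zero-mean ``effective noise'' (beamforming-gain uncertainty, multiuser interference, and thermal noise) as the worst-case independent Gaussian noise known only in distribution to user~$k$, yields the achievable SE $R_k = (1-\tau_p/\tau_c)\log_2(1+\mathrm{SINR}_k)$ with $\mathrm{SINR}_k = \big|\sum_{m \in \mathcal{A}}\mathbb{E}\{\mathbf{h}_{mk}^H\mathbf{w}_{mk}\}\big|^2 \big/ \big( \sum_{k'=1}^K \mathbb{E}\{|\sum_{m \in \mathcal{A}}\mathbf{h}_{mk}^H\mathbf{w}_{mk'}|^2\} - |\sum_{m \in \mathcal{A}}\mathbb{E}\{\mathbf{h}_{mk}^H\mathbf{w}_{mk}\}|^2 + \sigma^2 \big)$; the pre-log factor is just the data fraction $1 - \tau_p/\tau_c$ of the coherence block after the $\tau_p$ pilot symbols.

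Second, I would record the estimator statistics needed below: the MMSE estimate satisfies $\hat{\mathbf{h}}_{mk} \sim \mathcal{CN}(\mathbf{0},\gamma_{mk}\mathbf{I}_N)$ with $\gamma_{mk}$ as stated, the error $\tilde{\mathbf{h}}_{mk} = \mathbf{h}_{mk} - \hat{\mathbf{h}}_{mk} \sim \mathcal{CN}(\mathbf{0},(\beta_{mk}-\gamma_{mk})\mathbf{I}_N)$ is independent of $\hat{\mathbf{h}}_{mk}$, quantities at different APs are independent, and—since co-pilot users produce the same received pilot at AP~$m$—$\hat{\mathbf{h}}_{mk'} = \sqrt{\gamma_{mk'}/\gamma_{mk}}\,\hat{\mathbf{h}}_{mk}$ for every $k' \in \mathcal{P}_k$. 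With these facts, $\mathbb{E}\{\mathbf{h}_{mk}^H\mathbf{w}_{mk}\} = \sqrt{\rho_{mk}}\,\mathbb{E}\{\|\hat{\mathbf{h}}_{mk}\|^2\}/\sqrt{N\gamma_{mk}} = \sqrt{N\rho_{mk}\gamma_{mk}}$, so summing over $\mathcal{A}$ and squaring reproduces the numerator $N(\sum_{m \in \mathcal{A}}\sqrt{\rho_{mk}\gamma_{mk}})^2$ of \eqref{eq:SINRk}.

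Third, I would evaluate $\mathbb{E}\{|\sum_{m \in \mathcal{A}}\mathbf{h}_{mk}^H\mathbf{w}_{mk'}|^2\}$ for each $k'$. By AP-independence this equals $\big|\sum_{m \in \mathcal{A}}\mathbb{E}\{\mathbf{h}_{mk}^H\mathbf{w}_{mk'}\}\big|^2 + \sum_{m \in \mathcal{A}}\mathrm{Var}(\mathbf{h}_{mk}^H\mathbf{w}_{mk'})$. For $k' \in \mathcal{P}_k$ the parallel-estimate identity gives $\mathbf{h}_{mk}^H\mathbf{w}_{mk'} = \sqrt{\rho_{mk'}/(N\gamma_{mk})}\,\mathbf{h}_{mk}^H\hat{\mathbf{h}}_{mk}$, with mean $\sqrt{N\rho_{mk'}\gamma_{mk}}$ (contributing the coherent term $N(\sum_{m\in\mathcal{A}}\sqrt{\rho_{mk'}\gamma_{mk}})^2$) and, using $\mathbb{E}\{\|\mathbf{g}\|^4\} = N(N+1)\gamma^2$ for $\mathbf{g}\sim\mathcal{CN}(\mathbf{0},\gamma\mathbf{I}_N)$ together with $\mathbb{E}\{|\tilde{\mathbf{h}}_{mk}^H\hat{\mathbf{h}}_{mk}|^2\} = N\gamma_{mk}(\beta_{mk}-\gamma_{mk})$, variance $\rho_{mk'}\beta_{mk}$ (contributing $\sum_{m\in\mathcal{A}}\rho_{mk'}\beta_{mk}$). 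For $k' \notin \mathcal{P}_k$, $\mathbf{w}_{mk'}$ is independent of $\mathbf{h}_{mk}$, the mean vanishes, and $\mathbb{E}\{|\mathbf{h}_{mk}^H\mathbf{w}_{mk'}|^2\} = \beta_{mk}\mathbb{E}\{\|\mathbf{w}_{mk'}\|^2\} = \rho_{mk'}\beta_{mk}$. Summing over $k'$ and subtracting the $k'=k$ coherent term $N(\sum_{m\in\mathcal{A}}\sqrt{\rho_{mk}\gamma_{mk}})^2$ leaves exactly $N\sum_{k'\in\mathcal{P}_k\setminus\{k\}}(\sum_{m\in\mathcal{A}}\sqrt{\rho_{mk'}\gamma_{mk}})^2 + \sum_{k'=1}^K\sum_{m\in\mathcal{A}}\rho_{mk'}\beta_{mk}$, which with the $+\sigma^2$ gives the denominator of \eqref{eq:SINRk}.

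The routine but error-prone step — where I expect to spend the most care — is the second-moment bookkeeping in the third part: one must correctly peel off the mean-times-mean cross terms that survive the AP-independence expansion, keep the $N^2$-scaling ``coherent'' contribution separate from the $N$-scaling ``non-coherent'' one, and track which power coefficient ($\rho_{mk}$ versus $\rho_{mk'}$) and which large-scale quantity ($\gamma_{mk}$ versus $\beta_{mk}$) multiplies each piece; a sign slip or a misattributed $\gamma$/$\beta$ is essentially the only thing that can derail the derivation, since everything else is direct substitution.
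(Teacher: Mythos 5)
Your proposal is correct and follows exactly the standard hardening/use-and-then-forget derivation that the paper invokes by reference to \cite{Chien2016b, interdonato2018} (the paper itself gives no explicit proof): the co-pilot estimate proportionality, the per-AP mean/variance split, and the resulting terms $\sqrt{N\rho_{mk'}\gamma_{mk}}$ and $\rho_{mk'}\beta_{mk}$ all reproduce \eqref{eq:SINRk} precisely, including the appearance of $\gamma_{mk}$ (not $\gamma_{mk'}$) in the coherent interference term. No gaps.
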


The difference between Lemma~\ref{lemma:ClosedFormMRC} and the previous work \cite{Chien2016b, interdonato2018} is that only the subset $\mathcal{A} $ of the APs are active. In \eqref{eq:SINRk}, the numerator is proportional to $N$ which is the array gain achieved by having multiple antennas at each AP. The first term in the denominator is coherent interference from the pilot-sharing users. The second term is conventional non-coherent interference and the third term is noise.

\section{Total Power Minimization Problem} \label{Sec:TotalPowerOpt}

In this section, we formulate a total power minimization problem under the SE requirements of the users and a maximum transmit power $P_{\max}$ per AP. We then rewrite it as a mixed-integer SOC program, which can be solved but with high complexity. 
Similar to \cite{ngo2018total}, we model the total power consumption of the all active APs as
\begin{align}
P_{\mathrm{total}} (\{ \rho_{mk} \}, \mathcal{A}) =& \sum_{m \in \mathcal{A}} \Delta \sum_{k=1}^K  \rho_{mk} + \sum_{m \in \mathcal{A}} P_{\mathrm{act}}
\label{eq:Ptotal}
\end{align}
where the first term is the transmit power multiplied with the inefficiency factor  $\Delta \geq 1$ of the power amplifiers and the second term contains the fixed power $P_{\mathrm{act}}$ per active AP. 

The total power consumption minimization problem is
\begin{subequations} \label{Prob:TotalPowerOpt}
\begin{alignat}{2}
& \underset{\{ \rho_{mk} \geq 0 \}, \mathcal{A} }{\mathrm{minimize}}
&& \, P_{\mathrm{total}} (\{ \rho_{mk} \}, \mathcal{A})  \\
& \,\textrm{subject to}
&&  R_{k} (\{ \rho_{mk} \}, \mathcal{A}) \geq \xi_{k}, \quad \forall k, \label{TotalPowerOpt:b} \\
&&& \sum_{k=1}^K  \rho_{mk}  \leq P_{\max}, \quad\forall m \in \mathcal{A}.
\end{alignat}
\end{subequations}
This is a complicated problem since all APs can transmit data to all users and also cause interference to all users. We first define $\nu_k = 2^{\xi_k \tau_c /(\tau_c - \tau_p)} -1, \forall k$ and notice that the SE constraint in \eqref{TotalPowerOpt:b} can be rewritten as
\begin{equation} \label{eq:SINR-constraints}
 \mathrm{SINR}_{k} (\{ \rho_{mk} \}, \mathcal{A})  \geq \nu_{k}, \quad \forall k.
 \end{equation}
To simplify the problem, we introduce the notation $\mathbf{r}_{\mathcal{A}}  \in \mathbb{C}^{|\mathcal{A}|K + 1} $, $\mathbf{z}_{k\mathcal{A}}  \in \mathbb{C}^{ |\mathcal{A}|}$, $\mathbf{g}_{k\mathcal{A}} \in \mathbb{C}^{ |\mathcal{A}|}$, $\mathbf{U}_{\mathcal{A}} \in \mathbb{C}^{|\mathcal{A}| \times K}$, $\mathbf{s}_{k\mathcal{A}} \in \mathbb{C}^{K + |\mathcal{P}_k|}$. These vectors and matrices are defined as
\begin{align}
\mathbf{r}_{\mathcal{A}} =& \Bigg[ \sqrt{\Delta_{m_{1'}} \rho_{\Delta_{m_{1'}}1}}, \ldots,  \sqrt{\Delta_{m_{|\mathcal{A}|}} \rho_{m_{|\mathcal{A}|}K}}, \sqrt{\sum_{m \in \mathcal{A}} P_{\mathrm{act}}} \Bigg]^T,\label{eq:r}\\
\mathbf{z}_{k\mathcal{A}} =& \left[\sqrt{z_{m_{1'}k}}, \ldots,\sqrt{z_{m_{|\mathcal{A}|}k}} \right]^T, \label{eq:zk} \\
\mathbf{g}_{k\mathcal{A}} =& \left[ \sqrt{N \gamma_{1k}}, \ldots,  \sqrt{N \gamma_{m_{\mathcal{A}}k}} \right]^T ,\label{eq:gk} \\
\mathbf{U}_{\mathcal{A}} =& [\mathbf{u}_1, \ldots, \mathbf{u}_K]^T , \label{eq:U}\\
\mathbf{s}_{k\mathcal{A}} =& \Big[\sqrt{\nu_k}\big( \mathbf{g}_{k\mathcal{A}}^T \mathbf{u}_{t_1'}, \ldots, \mathbf{g}_{k\mathcal{A}}^T \mathbf{u}_{t_{|\mathcal{P}_k \setminus \{k\}|}'}, \ldots \notag \\ & \quad \quad \| \mathbf{z}_{k\mathcal{A}} \circ \mathbf{u}_1 \|, \ldots,  \| \mathbf{z}_{k\mathcal{A}} \circ \mathbf{u}_K \|, \sigma \big) \Big]^T , \label{eq:sk}
\end{align}
 where  $m_{1'},\ldots, m_{|\mathcal{A}|} $ are the members of the set $\mathcal{A}$ (i.e., the indices of the active APs). The $k$th column of $\mathbf{U}_{\mathcal{A}}$ in \eqref{eq:U} is denoted $\mathbf{u}_k = [\sqrt{\rho_{1k}}, \ldots, \sqrt{\rho_{m_{\mathcal{A}k}}}]^T$ and the $m$th row is denoted as $\mathbf{u}_m'$. In \eqref{eq:sk}, $t_1', \ldots, t_{|\mathcal{P}_k \setminus \{k\}|}'$ are the indices of the users belonging to the set $\mathcal{P}_k \setminus \{k\}$, and $|\mathcal{P}_k|$ is the cardinality of the set $\mathcal{P}_k$. The Hadamard product is denoted by $\circ$. By using these notations and \eqref{eq:SINR-constraints}, we can obtain an equivalent epigraph representation of problem~\eqref{Prob:TotalPowerOpt} as
\begin{subequations} \label{Prob:NonConvexSOCP} 
\begin{alignat}{2}
& \underset{ \{ \rho_{mk} \geq 0 \}, \mathcal{A}, s_{\mathcal{A}} }{\mathrm{minimize}} \,\,\,
& &  s_{\mathcal{A}}\\
& \,\,\,\, {\mathrm{subject \,\, to}}
&&  \| \mathbf{r}_{\mathcal{A}} \| \leq  s_{\mathcal{A}}, \label{NonConvexSOCP:b}  \\
& & & \| \mathbf{s}_{k\mathcal{A}} \| \leq \mathbf{g}_{k\mathcal{A}}^T \mathbf{u}_{k\mathcal{A}}, \; \forall k = 1, \ldots, K, \label{NonConvexSOCP:c} \\
& & & || \mathbf{u}_{m}'|| \leq \sqrt{P_{\mathrm{max}}}, \; \forall m \in \mathcal{A}. \label{NonConvexSOCP:d}
\end{alignat}
\end{subequations}
The auxiliary variable $s_{\mathcal{A}}$ moves the objective function  of problem~\eqref{Prob:TotalPowerOpt} to the first constraint in \eqref{NonConvexSOCP:b}. We observe that for a given $\mathcal{A}$, problem~\eqref{Prob:TotalPowerOpt} reduces to an SOC program, as previously shown in \cite{Ngo2017a, Nayebi2017a}. Hence, although \eqref{Prob:NonConvexSOCP} is non-convex and NP-hard, it can be solved by making an exhaustive search over all possible selections of $\mathcal{A}$. Since at least one AP needs to be active if there is $K \geq 1$ users with non-zero SE requirements, there are $2^M - 1$ different selections of the APs that need to be considered in an exhaustive search. 

\subsection{Globally Optimal Solution} \label{subsec:optimal-solution}

Making an exhaustive search to solve \eqref{Prob:NonConvexSOCP} is very computationally costly even in a relatively small network. We will therefore further reformulate the problem to reduce the complexity, while guaranteeing to find the global optimum.

Let the binary optimization variable $\alpha_m \in \{0,1 \}$ characterize the on/off activity of AP $m$. We can then replace the maximum transmit power of AP~$m$ by $\alpha_m^2 P_{\max}$, which takes the original value $P_{\max}$ when the AP is active and is zero when the AP is turned off. This feature is exploited to formulate a mixed-integer SOC program.
\begin{lemma} \label{Theorem:MixedInSOCP}
Consider the mixed-integer SOC program
\begin{subequations} \label{Prob:BnB} 
	\begin{alignat}{2}
	& \underset{ \{ \rho_{mk} \geq 0 \}, \{ \alpha_m \}, s }{\mathrm{minimize}} \,\,\,
	& &  s \\
	& \,\,\,\,\,\,{\mathrm{subject \,\, to}}
	&&  \| \mathbf{r} \| \leq  s, \label{BnB:b}  \\
	& & & \| \mathbf{s}_{k} \| \leq \mathbf{g}_{k}^T \mathbf{u}_{k}, \; \forall k = 1, \ldots, K, \label{BnB:c} \\
	& & & \| \tilde{\mathbf{u}}_{m}'\| \leq \alpha_m \sqrt{P_{\mathrm{max}}}, \; \forall m = 1, \ldots, M, \label{BnB:d} \\
	&&& \alpha_m \in \{0,1 \}, \forall m = 1, \ldots, M,
	\end{alignat}
\end{subequations}
where $\tilde{\mathbf{u}}_{m}'$ is the $m$-th row of matrix $\widetilde{\mathbf{U}} = [\tilde{\mathbf{u}}_1, \ldots, \tilde{\mathbf{u}}_K] \in \mathbb{C}^{M \times K}$ and $\tilde{\mathbf{u}}_k = [\sqrt{\rho_{1k}}, \ldots, \sqrt{\rho_{Mk}}]^T \in \mathbb{C}^M, k = 1\ldots, K$. Moreover, the vectors $\mathbf{r}\in \mathbb{C}^{MK + M}$ and $\mathbf{s}_k \in \mathbb{C}^{K + |\mathcal{P}_k|}$ are defined as
\begin{align}
& \mathbf{r} =  \Big[ \sqrt{\Delta_{1} \rho_{11}}, \ldots,  \sqrt{\Delta_{M} \rho_{MK}}, \alpha_1 \sqrt{P_{\mathrm{act},1}}, \ldots,  \alpha_M \sqrt{P_{\mathrm{act}}} \Big]^T  , \label{eq:rbnb}\\
&\mathbf{s}_k = \Big[\sqrt{\nu_k}\big( \mathbf{g}_{k}^T \mathbf{u}_{t_1'}, \ldots, \mathbf{g}_{k}^T \mathbf{u}_{t_{|\mathcal{P}_k\! \setminus\! \{k\}|}'}, \| \mathbf{z}_{k} \circ \mathbf{u}_1 \|, \ldots, \notag \\ & \quad\quad\quad\quad\quad\quad \| \mathbf{z}_{k} \circ \mathbf{u}_K \|, \sigma \big) \Big]^T ,
\end{align}
with $ \mathbf{z}_{k} = \left[\sqrt{z_{1k}}, \ldots,\sqrt{z_{Mk}} \right]^T $ and $\mathbf{g}_{k} = \left[ \sqrt{N \gamma_{1k}}, \ldots,  \sqrt{N \gamma_{Mk}} \right]^T$.

Problems~\eqref{Prob:NonConvexSOCP} and \eqref{Prob:BnB} are equivalent in the sense that they have the same optimal transmit powers. If we denote by $\{\alpha_{m}^\ast \}$ an optimal solution to the binary variables $\{ \alpha_m \}$ in \eqref{Prob:BnB}, the optimal set of active APs in problem~\eqref{Prob:NonConvexSOCP} is
\begin{equation} \label{eq:ActiveAPv1}
\mathcal{A} = \left\{ m : \alpha_m^\ast = 1,  m \in \{1, \ldots, M \} \right\}.
\end{equation}
\end{lemma}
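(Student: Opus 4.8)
The plan is to establish the equivalence of problems~\eqref{Prob:NonConvexSOCP} and \eqref{Prob:BnB} by showing that their feasible sets project onto the same set of power allocations, and that the objective values coincide on matching points. I would proceed in two directions. First, given any feasible point of \eqref{Prob:NonConvexSOCP} with active set $\mathcal{A}$ and powers $\{\rho_{mk}\}$ (where $\rho_{mk}=0$ is understood for $m\notin\mathcal{A}$), I set $\alpha_m = 1$ for $m\in\mathcal{A}$ and $\alpha_m = 0$ otherwise, and extend the powers by zero. I then verify that constraints \eqref{BnB:b}--\eqref{BnB:d} hold: constraint \eqref{BnB:d} is immediate, since for $m\in\mathcal{A}$ it reduces to \eqref{NonConvexSOCP:d}, and for $m\notin\mathcal{A}$ the left-hand side $\|\tilde{\mathbf{u}}_m'\|$ is zero because all corresponding powers vanish, matching $\alpha_m\sqrt{P_{\max}}=0$; crucially this \emph{forces} $\rho_{mk}=0$ for all inactive APs, which is exactly what we want. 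Constraints \eqref{BnB:b} and \eqref{BnB:c} then follow by observing that the zero-padding of $\mathbf{U}_{\mathcal{A}}$ to $\widetilde{\mathbf{U}}$ and of $\mathbf{r}_{\mathcal{A}}$ to $\mathbf{r}$ does not change any Euclidean norm or inner product appearing in those constraints, since $\sqrt{P_{\mathrm{act}}}$ terms for inactive APs are multiplied by $\alpha_m=0$ and the $\gamma_{mk}$, $z_{mk}$ entries get paired with zero powers.

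Second, conversely, given a feasible point $(\{\rho_{mk}\},\{\alpha_m\},s)$ of \eqref{Prob:BnB}, I define $\mathcal{A}$ via \eqref{eq:ActiveAPv1}. The key observation is that constraint \eqref{BnB:d} with $\alpha_m=0$ yields $\|\tilde{\mathbf{u}}_m'\|\le 0$, hence $\rho_{mk}=0$ for every inactive AP $m$ and every user $k$. Therefore all sums over $m\in\{1,\ldots,M\}$ in $\mathbf{r},\mathbf{s}_k,\mathbf{g}_k^T\mathbf{u}_k$ restrict automatically to $m\in\mathcal{A}$, and the constraints \eqref{BnB:b}--\eqref{BnB:d} collapse exactly to \eqref{NonConvexSOCP:b}--\eqref{NonConvexSOCP:d} with the objects $\mathbf{r}_{\mathcal{A}},\mathbf{s}_{k\mathcal{A}},\mathbf{g}_{k\mathcal{A}}$, so $(\{\rho_{mk}\}_{m\in\mathcal{A}},\mathcal{A},s)$ is feasible for \eqref{Prob:NonConvexSOCP}. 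Since in both directions the objective is literally $s$ (or $s_{\mathcal{A}}$), the optimal values match, and the constructed maps carry optimal solutions to optimal solutions; in particular if $\{\alpha_m^\ast\}$ is optimal for \eqref{Prob:BnB}, the set in \eqref{eq:ActiveAPv1} together with the accompanying powers is optimal for \eqref{Prob:NonConvexSOCP}, and the transmit powers coincide. Finally, I would note that for fixed $\{\alpha_m\}$ problem~\eqref{Prob:BnB} is an SOC program (the constraints are second-order cone constraints, linear in the squared-power variables after the standard substitution), so \eqref{Prob:BnB} is a genuine mixed-integer SOC program and can be handled by branch-and-bound.

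The main subtlety — and the step I would be most careful about — is the bookkeeping that the zero-padding is truly norm- and inner-product-preserving, i.e.\ that every place where \eqref{Prob:BnB} sums over all $M$ APs genuinely reduces to the sum over $\mathcal{A}$ once the inactive powers are killed by \eqref{BnB:d}. This hinges on the $P_{\mathrm{act}}$ contributions in $\mathbf{r}$ being gated by $\alpha_m$ (so that $\sum_{m\in\mathcal{A}}P_{\mathrm{act}}=\sum_{m=1}^M\alpha_m^2 P_{\mathrm{act}}$ reproduces the last block of $\mathbf{r}_{\mathcal{A}}$ in norm), and on the coherent-interference and non-coherent-interference terms in $\mathbf{s}_k$ depending on $\rho_{mk}$ only through products that vanish whenever $\rho_{mk}=0$. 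Everything else is routine substitution; there is no optimization-theoretic gap to bridge because the equivalence is established at the level of feasible sets and objective values directly, not via duality or relaxation arguments.
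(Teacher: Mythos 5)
Your argument is correct: the two-way feasible-point correspondence (zero-extension of powers with $\alpha_m=\mathbbm{1}\{m\in\mathcal{A}\}$ in one direction, and using \eqref{BnB:d} with $\alpha_m=0$ to force $\rho_{mk}=0$ and restrict to $\mathcal{A}=\{m:\alpha_m=1\}$ in the other) preserves all norms and inner products, including the $\sum_m \alpha_m^2 P_{\mathrm{act}}$ block of $\|\mathbf{r}\|^2$, so the objectives and hence the optimal transmit powers coincide. The paper omits its proof for space, but this is precisely the mechanism it describes in the surrounding text (the implicit coupling through the per-AP power cap $\alpha_m^2 P_{\max}$), so your proposal matches the intended argument.
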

\begin{proof}
The proof is omitted due to the space limitations.
\end{proof}
Problem~\eqref{Prob:BnB} is a mixed-integer SOC program on standard form and can, thus, be solved by the general-purpose toolbox CVX \cite{cvx2015} using the MOSEK solver \cite{Mosek}. These softwares apply a branch-and-bound approach to deal with the binary variables.

The new binary variables provide the explicit link between the hardware and transmit power consumption, which is an important factor to obtain the global optimum to problem~\eqref{Prob:BnB}. A key reason that we can preserve the SOC structure despite adding the new binary variables is that the binary variables are not involved in the SINR constraints \eqref{BnB:c}. Instead there is an implicit connection via the zero maximum transmit power for inactive APs. This is different from previous approaches (e.g., \cite{feng2017boost}) where $\alpha_m$ appears in the SINRs and therefore would break the SOC structure.

\section{Sparsity-Based Low-Complexity Algorithm} \label{Sec:SubSols}

Motivated by the high computational complexity of solving the total power minimization problem using Lemma~\ref{Theorem:MixedInSOCP}, we will now propose an algorithm that finds a good suboptimal solution with tolerable complexity. Since we are searching for a solution where many of the power variables are zero, we will re-express \eqref{Prob:TotalPowerOpt} as a sparse reconstruction problem where we try to push many of the transmit power variables to become zero. We begin with the following result.

\begin{lemma} \label{lemma:mixedL2L0}
	The original problem~\eqref{Prob:TotalPowerOpt} has the same optimal transmit powers as the following problem
	\begin{equation} \label{Prob:TotalPowerOptLp}
	\begin{aligned}
	& \underset{\{ \rho_{mk} \geq 0 \} }{\mathrm{minimize}}
	&&   \sum_{ m=1}^M \Delta \| \pmb{\rho}_{m} \|^{2}  + \mathbbm{1}_m (\pmb{\rho}_{m}) P_{\mathrm{act}}\\
	& \,\,\textrm{subject to}
	& &  \| \mathbf{s}_{k\mathcal{A}_M} \| \leq \mathbf{g}_{k\mathcal{A}_M}^T \mathbf{u}_{k\mathcal{A}_M}, \; \forall k = 1,\ldots, K,\\
	& & & || \mathbf{u}_{m}'|| \leq \mathbbm{1}_m (\pmb{\rho}_{m}) \sqrt{P_{\mathrm{max}}}, \; \forall m = 1, \ldots, M,
	\end{aligned}
	\end{equation}
	where $\pmb{\rho}_m = [\sqrt{\rho_{m1}}, \ldots, \sqrt{\rho_{mK}}]^T \in \mathbb{C}^K$, $\mathcal{A}_M = \{ 1, \ldots, M\}$,	
	 and
	\begin{equation} \label{eq:IndxFunc}
	\mathbbm{1}_m (\pmb{\rho}_{m}) = \begin{cases}
	1, & \mbox{if } \| \pmb{\rho}_{m} \| > 0, \\
	0, & \mbox{if } \| \pmb{\rho}_{m} \| = 0.
	\end{cases}
	\end{equation}
Moreover, if we denote by $\{ \rho_{mk}^{\ast} \}$ the optimal set of all transmit powers to \eqref{Prob:TotalPowerOptLp}, then the set
\begin{equation} \label{eq:AsetV1}
\mathcal{A} = \left\{ m : \| \pmb{\rho}_m^{\ast} \| = 0, m \in \{1, \ldots, M \} \right\}
\end{equation}
is the optimal set of active APs to problem~\eqref{Prob:TotalPowerOpt}.
\end{lemma}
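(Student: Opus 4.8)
The plan is to establish the equivalence by exhibiting two maps between the feasible sets of \eqref{Prob:TotalPowerOpt} and \eqref{Prob:TotalPowerOptLp} that preserve feasibility and do not increase the objective, and then to read the active set \eqref{eq:AsetV1} off the map that goes from \eqref{Prob:TotalPowerOptLp} back to \eqref{Prob:TotalPowerOpt}. Everything rests on one elementary property of the SINR: in \eqref{eq:SINRk} every appearance of an AP index $m$ enters multiplied either by $\sqrt{\rho_{mk'}}$ (in the signal term and in the coherent-interference term) or by $\rho_{mk'}$ (in the non-coherent term), so if $\rho_{mk'}=0$ for all $k'$ and all $m\notin\mathcal{A}$ then $\mathrm{SINR}_k(\{\rho_{mk}\},\mathcal{A})=\mathrm{SINR}_k(\{\rho_{mk}\},\{1,\ldots,M\})$ for every $k$; equivalently, on such ``zero-padded'' allocations the SOC constraint \eqref{NonConvexSOCP:c} built from the set $\mathcal{A}$ and the constraint in \eqref{Prob:TotalPowerOptLp} (which uses $\mathcal{A}_M$) coincide. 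I would also record that $\sum_k\rho_{mk}=\|\pmb{\rho}_m\|^2$, so that $P_{\mathrm{total}}(\{\rho_{mk}\},\mathcal{A})=\sum_{m\in\mathcal{A}}\Delta\|\pmb{\rho}_m\|^2+|\mathcal{A}|P_{\mathrm{act}}$.

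\emph{From \eqref{Prob:TotalPowerOpt} to \eqref{Prob:TotalPowerOptLp}.} Given a feasible $(\{\rho_{mk}\},\mathcal{A})$, extend it to all $M$ APs by setting $\rho_{mk}=0$ for $m\notin\mathcal{A}$. By the property above the SINR constraints of \eqref{Prob:TotalPowerOptLp} hold; for the per-AP constraints, $\|\pmb{\rho}_m\|>0$ forces $m\in\mathcal{A}$ and then $\mathbbm{1}_m(\pmb{\rho}_m)=1$ with $\|\mathbf{u}_m'\|^2=\sum_k\rho_{mk}\le P_{\max}$, while otherwise $\mathbbm{1}_m(\pmb{\rho}_m)=0$ and $\pmb{\rho}_m=\mathbf{0}$. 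The new objective is $\sum_{m\in\mathcal{A}}\Delta\|\pmb{\rho}_m\|^2+|\{m\in\mathcal{A}:\|\pmb{\rho}_m\|>0\}|\,P_{\mathrm{act}}\le P_{\mathrm{total}}(\{\rho_{mk}\},\mathcal{A})$, since $P_{\mathrm{act}}\ge0$. Hence the optimal value of \eqref{Prob:TotalPowerOptLp} is at most that of \eqref{Prob:TotalPowerOpt}.

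\emph{From \eqref{Prob:TotalPowerOptLp} to \eqref{Prob:TotalPowerOpt}, and the active set.} Given a feasible $\{\rho_{mk}\}$, set $\mathcal{A}=\{m:\|\pmb{\rho}_m\|>0\}$, which is nonempty whenever some $\xi_k>0$ because the all-zero allocation violates \eqref{NonConvexSOCP:c}. By the same property the SINR constraints \eqref{TotalPowerOpt:b} hold for $(\{\rho_{mk}\},\mathcal{A})$, and for $m\in\mathcal{A}$ we have $\mathbbm{1}_m(\pmb{\rho}_m)=1$, hence $\sum_k\rho_{mk}\le P_{\max}$; thus $(\{\rho_{mk}\},\mathcal{A})$ is feasible for \eqref{Prob:TotalPowerOpt}. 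Since the APs outside $\mathcal{A}$ contribute nothing to either expression, $P_{\mathrm{total}}(\{\rho_{mk}\},\mathcal{A})=\sum_{m=1}^M\Delta\|\pmb{\rho}_m\|^2+\mathbbm{1}_m(\pmb{\rho}_m)P_{\mathrm{act}}$, so the objectives are \emph{equal} and the optimal value of \eqref{Prob:TotalPowerOpt} is at most that of \eqref{Prob:TotalPowerOptLp}. Combining the two directions, the optimal values agree and the optimal transmit powers coincide; and applying this second map to an optimizer $\{\rho_{mk}^\ast\}$ of \eqref{Prob:TotalPowerOptLp} yields an optimizer of \eqref{Prob:TotalPowerOpt} with the same transmit powers whose active APs are precisely those $m$ with $\|\pmb{\rho}_m^\ast\|>0$, which is the characterization \eqref{eq:AsetV1}.

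\emph{Main obstacle.} There is no deep difficulty; the proof is essentially bookkeeping. The two points I would be careful about are (i) verifying the zero-padding invariance of \eqref{eq:SINRk} term by term, in particular that it is not spoiled by the square roots in the signal and coherent-interference terms, and (ii) handling the discontinuous indicator \eqref{eq:IndxFunc} correctly, including the harmless case of an AP nominally in $\mathcal{A}$ that transmits zero power --- this is exactly why the first map yields only ``$\le$'' rather than equality of objectives, whereas the second yields equality. It is also worth noting in passing that the infimum in \eqref{Prob:TotalPowerOptLp} is attained, since its feasible set is a finite union, over support patterns $\mathcal{S}\subseteq\{1,\ldots,M\}$, of compact SOC-feasible sets.
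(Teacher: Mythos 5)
Your argument is correct, and since the paper omits its own proof there is nothing to compare it against; the route you take (zero-padding invariance of \eqref{eq:SINRk}, the identity $\sum_k\rho_{mk}=\|\pmb{\rho}_m\|^2$, and two feasibility-preserving maps giving ``$\le$'' one way and equality the other, which together rule out the only delicate case of an AP kept in $\mathcal{A}$ at zero power) is the natural one. Two small points. First, what you actually prove is that the optimal active set is $\{m:\|\pmb{\rho}_m^\ast\|>0\}$, whereas \eqref{eq:AsetV1} as printed reads $\|\pmb{\rho}_m^\ast\|=0$; this is evidently a sign typo in the statement (compare \eqref{eq:ActiveAPv1}, where the active APs are those with $\alpha_m^\ast=1$), so your conclusion matches the intended claim, but you should not assert that it ``is the characterization \eqref{eq:AsetV1}'' without flagging the discrepancy. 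Second, your closing attainment remark is slightly loose: the set of feasible points with a fixed \emph{exact} support $\mathcal{S}$ is not closed, because of the strict inequalities $\|\pmb{\rho}_m\|>0$ for $m\in\mathcal{S}$. The fix is immediate --- write the feasible set as the finite union over $\mathcal{S}$ of the compact sets obtained by imposing $\pmb{\rho}_m=\mathbf{0}$ for $m\notin\mathcal{S}$ together with the SOC and per-AP constraints, note that on each such set the true objective is bounded above by the continuous function $\sum_{m\in\mathcal{S}}\Delta\|\pmb{\rho}_m\|^2+|\mathcal{S}|P_{\mathrm{act}}$ and agrees with it at points of exact support $\mathcal{S}$ --- or simply inherit attainment from the exhaustive-search formulation of \eqref{Prob:NonConvexSOCP}, where each fixed $\mathcal{A}$ gives a feasibility-constrained SOC program over a compact set.
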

\begin{proof}
The proof is omitted due to the space limitations. 
\end{proof}
Lemma~\ref{lemma:mixedL2L0} shows that we do not need separate variables for optimizing the active APs set~$\mathcal{A}$, but we can implicitly determine if AP $m$ is active or not by checking if $\| \pmb{\rho}_m \| >0$ or $\| \pmb{\rho}_m \| =0$. The reformulated problem~\eqref{Prob:TotalPowerOptLp} has fewer variables but remains non-convex due to the $\ell_0$-norm in the objective.
A standard approach is to relax the $\ell_0$-norm to a convex $\ell_1$-norm, but we will take another approach that is known to utilize the sparsity more effectively \cite{zhang2015survey}. We consider an $\ell_p$-norm relaxation of problem~\eqref{Prob:TotalPowerOptLp} for some $0<p<1$:\footnote{From the range of the considered $\ell_p$-norms, the condition $0 < \tilde{p}/2 < 1$ as in \eqref{Prob:TotalPowerOptLpv0} leads to $0 < \tilde{p} < 2$.} 
\begin{subequations} \label{Prob:TotalPowerOptLpv0}
\begin{alignat}{2}
&\,\, \underset{\{ \rho_{mk} \geq 0 \} }{\mathrm{minimize}}
&&   \left( \sum_{ m=1}^M \left(\Delta^{2/\tilde{p}} \| \pmb{\rho}_{m} \|^{2} \right)^{\tilde{p}/2}  + P_{\mathrm{act}}^{\tilde{p}/2} \right)^{2/\tilde{p}} \label{TotalPowerOptLpv0:a}\\
& \,\,\mathrm{subject \,\, to}
& &  \| \mathbf{s}_{k \mathcal{A}_M} \| \leq \mathbf{g}_{k\mathcal{A}_M}^T \mathbf{u}_{k\mathcal{A}_M}, \; \forall k = 1,\ldots, K,\\
& & & || \mathbf{u}_{m}'|| \leq \sqrt{P_{\mathrm{max}}}, \; \forall m = 1, \ldots, M.
\end{alignat}
\end{subequations}
The objective function of problem~\eqref{Prob:TotalPowerOptLpv0} treats every vector $\pmb{\rho}_{m}$ as an entity in $\Delta^{2/\tilde{p}} \| \pmb{\rho}_{m} \|^{2}$ when seeking for a sparse solution. This group-sparse approach differs from the previous works that consider element-based  \cite{candes2008enhancing} or beamforming-vector-based  \cite{luo2014downlink} sparsity.

Even though  problem~\eqref{Prob:TotalPowerOptLpv0} remains non-convex after the norm relaxation, we can find a stationary point by adapting the iteratively reweighted least square approach \cite{ba2014}, that was originally developed for component-wise sparsity. By dropping the exponent $2/\tilde{p}$ and $P_{\mathrm{act}}$ in \eqref{TotalPowerOptLpv0:a}, we achieve the following result.

\begin{theorem} \label{Theorem:Sparsity}
	Suppose problem~\eqref{Prob:TotalPowerOptLpv0} is feasible. Since the feasible set is convex, we can construct an iterative algorithm that starts with the given initial weight values $a_m^{(0)} = 1, \forall m=1, \ldots, M,$ and in iteration $n=1,2,\ldots$ solves the following SOC program: \vspace{-2mm}
	\begin{equation} \label{Prob:TotalPowerOptv3}
	\begin{aligned}
	& \underset{\{ \rho_{mk} \geq 0 \} }{\mathrm{minimize}}
	&&   \sum_{ m=1}^M a_m^{(n-1)} \| \pmb{\rho}_{m} \|^{2}\\
	& \,\,\mathrm{subject \,to}
	& &  \| \mathbf{s}_{k \mathcal{A}_M} \| \leq \mathbf{g}_{k\mathcal{A}_M}^T \mathbf{u}_{k\mathcal{A}_M}, \; \forall k = 1,\ldots, K,\\
	& & & || \mathbf{u}_{m}'|| \leq \sqrt{P_{\mathrm{max}}}, \; \forall m = 1, \ldots, M,
	\end{aligned}
	\end{equation}
	to yield the solution $\{ \pmb{\rho}_{m}^{\ast, (n)} \}$, for which
	\begin{equation}
	\pmb{\rho}_{m}^{\ast,(n)} = \left[\sqrt{\rho_{m1}^{\ast,(n)}}, \ldots, \sqrt{\rho_{mK}^{\ast,(n)}} \right]^T \in \mathbb{C}^K,
	\end{equation}
	is the optimal transmit powers for AP~$m$ at iteration~$n$. After that, the weight values are updated for the next iteration as
	\begin{equation} \label{eq:weightv1}
	a_m^{(n)} = \frac{\Delta \tilde{p}}{2}  \left( \| \pmb{\rho}_{m}^{\ast, (n)} \|^2 + \epsilon_n^2 \right)^{ \frac{\tilde{p}}{2} - 1},
	\end{equation}
where $\epsilon_n$ is a sufficient small positive damping constant with $\epsilon_n \leq \epsilon_{n-1}$ and $\lim_{n \rightarrow \infty } \epsilon_n = 0$. The proposed iterative process exhibits the properties below:
	\begin{enumerate}
		\item The objective function of~\eqref{Prob:TotalPowerOptLpv0} reduces after each iteration until reaching a fixed point, which is a stationary point of~\eqref{Prob:TotalPowerOptLpv0}.
		\item If an arbitrary AP~$m$ has zero transmit power at the optimum of iteration $n$, this AP will have zero transmit power in all the following iterations.
	\end{enumerate} 
\end{theorem}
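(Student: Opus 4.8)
The plan is to recognize the iteration \eqref{Prob:TotalPowerOptv3}--\eqref{eq:weightv1} as a majorization--minimization (MM) / iteratively reweighted least squares scheme for the $\epsilon$-smoothed surrogate
\[
\Phi_{\epsilon}(\{\pmb{\rho}_m\}) \;=\; \Delta\sum_{m=1}^M \big(\|\pmb{\rho}_m\|^2 + \epsilon^2\big)^{\tilde{p}/2},
\]
which, because $(\Delta^{2/\tilde{p}}\|\pmb{\rho}_m\|^2)^{\tilde{p}/2}=\Delta\|\pmb{\rho}_m\|^{\tilde{p}}$ and $x\mapsto(x+P_{\mathrm{act}}^{\tilde{p}/2})^{2/\tilde{p}}$ is increasing, coincides in the limit $\epsilon\downarrow 0$ with the (exponent-stripped, $P_{\mathrm{act}}$-free) objective of \eqref{Prob:TotalPowerOptLpv0}. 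The first step is to build, at the current iterate $\{\pmb{\rho}_m^{\ast,(n-1)}\}$ with damping $\epsilon_{n-1}$, the tangent majorizer of $\Phi_{\epsilon_{n-1}}$: since $s\mapsto(s+\epsilon^2)^{\tilde{p}/2}$ is concave on $s\ge 0$ for $0<\tilde{p}<2$, its graph lies below each of its tangents, so
\[
\big(\|\pmb{\rho}_m\|^2+\epsilon_{n-1}^2\big)^{\tilde{p}/2} \;\le\; c_m + \tfrac{\tilde{p}}{2}\big(\|\pmb{\rho}_m^{\ast,(n-1)}\|^2+\epsilon_{n-1}^2\big)^{\tilde{p}/2-1}\,\|\pmb{\rho}_m\|^2,
\]
with equality at $\pmb{\rho}_m=\pmb{\rho}_m^{\ast,(n-1)}$ and $c_m$ independent of $\pmb{\rho}_m$. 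Summing over $m$ and multiplying by $\Delta$ exhibits $\Phi_{\epsilon_{n-1}}$ as majorized by $Q_{n-1}(\{\pmb{\rho}_m\})=\mathrm{const}+\sum_m a_m^{(n-1)}\|\pmb{\rho}_m\|^2$ with exactly the weights \eqref{eq:weightv1}, and minimizing $Q_{n-1}$ over the (iteration-independent, convex) feasible set shared by \eqref{Prob:TotalPowerOptLpv0} and \eqref{Prob:TotalPowerOptv3} is precisely the SOC program \eqref{Prob:TotalPowerOptv3}.

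For Property~1 I would run the standard MM sandwich: optimality of $\{\pmb{\rho}_m^{\ast,(n)}\}$ for \eqref{Prob:TotalPowerOptv3}, the majorization inequality, and the tangency identity give
\[
\Phi_{\epsilon_{n-1}}(\{\pmb{\rho}_m^{\ast,(n)}\}) \le Q_{n-1}(\{\pmb{\rho}_m^{\ast,(n)}\}) \le Q_{n-1}(\{\pmb{\rho}_m^{\ast,(n-1)}\}) = \Phi_{\epsilon_{n-1}}(\{\pmb{\rho}_m^{\ast,(n-1)}\}),
\]
and since $\Phi_\epsilon$ is nondecreasing in $\epsilon$ while $\epsilon_n\le\epsilon_{n-1}$, the sequence $\Phi_{\epsilon_n}(\{\pmb{\rho}_m^{\ast,(n)}\})$ is nonincreasing; being $\ge 0$, it converges. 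The per-AP cap $\|\mathbf{u}_m'\|^2=\|\pmb{\rho}_m\|^2\le P_{\max}$ keeps all iterates in a compact set, and $0\le\Phi_{\epsilon_n}-\Phi_0\le\Delta M\epsilon_n^{\tilde{p}}\to 0$ by subadditivity of $t\mapsto t^{\tilde{p}/2}$, so the smoothed values and the objective of \eqref{Prob:TotalPowerOptLpv0} share the same limit. For the fixed-point part, note that $\nabla_{\pmb{\rho}_m}Q_{n-1}=2a_m^{(n-1)}\pmb{\rho}_m$ equals $\nabla_{\pmb{\rho}_m}\Phi_{\epsilon_{n-1}}=\Delta\tilde{p}(\|\pmb{\rho}_m\|^2+\epsilon_{n-1}^2)^{\tilde{p}/2-1}\pmb{\rho}_m$ at the construction point; hence a fixed point satisfies the KKT system of \eqref{Prob:TotalPowerOptv3}, which is verbatim the KKT system of $\min\Phi_{\epsilon_{n-1}}$ over the same constraints, i.e.\ it is stationary for the smoothed problem. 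Driving $\epsilon_n\downarrow 0$ along the iteration and passing to a limit point --- using compactness and a constraint qualification to keep the multipliers bounded --- yields a stationary point of \eqref{Prob:TotalPowerOptLpv0}; the outer exponent $2/\tilde{p}$ and the constant $P_{\mathrm{act}}^{\tilde{p}/2}$ merely rescale the multipliers by the positive factor $\tfrac{2}{\tilde{p}}(\Phi_0+P_{\mathrm{act}}^{\tilde{p}/2})^{2/\tilde{p}-1}$ and do not affect stationarity.

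For Property~2 the crucial structural fact is that $s\mapsto\tfrac{\Delta\tilde{p}}{2}(s+\epsilon_n^2)^{\tilde{p}/2-1}$ is \emph{decreasing} (as $\tilde{p}/2-1<0$), so if $\|\pmb{\rho}_m^{\ast,(n)}\|=0$ then $a_m^{(n)}=\tfrac{\Delta\tilde{p}}{2}\epsilon_n^{\tilde{p}-2}$ is the \emph{largest} reweighting coefficient, and, because $\epsilon_n\le\epsilon_{n-1}$, it cannot decrease as long as AP~$m$ remains off. I would then work at the level of the convex subproblem: $\{\pmb{\rho}_m^{\ast,(n)}\}$ is still feasible at iteration $n+1$ and its objective there does not charge block $m$, so the minimizer cannot profit by re-activating the most heavily penalized AP. Made precise via the KKT conditions of the reduced (AP-$m$-deleted) subproblem --- whose block-$m$ stationarity relation at $\mathbf{u}_m'=0$ does not even involve $a_m$ --- this forces $\|\pmb{\rho}_m^{\ast,(n+1)}\|=0$, and induction closes the loop.

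The step I expect to be the real obstacle is making Property~2, together with the passage to a genuine stationary point of the \emph{nonsmooth} problem \eqref{Prob:TotalPowerOptLpv0}, fully rigorous for strictly positive $\epsilon_n$: with a finite reweighting coefficient the subproblem solution is only \emph{pushed} toward zero rather than pinned there, so one has to exploit the precise SOC geometry of \eqref{NonConvexSOCP:c}--\eqref{NonConvexSOCP:d} (or invoke the usual device of permanently removing an AP once its power vanishes, which is consistent since $a_m^{(n)}\to\infty$ as $\epsilon_n\downarrow 0$). Everything else --- the concavity/tangent majorization, the monotone-decrease sandwich, compactness, and the gradient-matching identity --- is routine once the surrogate underlying \eqref{TotalPowerOptLpv0:a} is identified.
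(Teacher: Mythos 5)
The paper omits its own proof of this theorem (``due to space limitations''), so there is nothing to compare your argument against; I can only assess it on its merits. Your identification of \eqref{Prob:TotalPowerOptv3}--\eqref{eq:weightv1} as majorization--minimization applied to the smoothed surrogate $\Phi_{\epsilon}(\{\pmb{\rho}_m\})=\Delta\sum_{m}(\|\pmb{\rho}_m\|^2+\epsilon^2)^{\tilde p/2}$ is surely the intended mechanism (the paper explicitly invokes the IRLS framework of \cite{ba2014}), and your treatment of Property~1 --- concavity of $s\mapsto(s+\epsilon^2)^{\tilde p/2}$ for $0<\tilde p<2$, the tangent majorizer whose slope is exactly the weight \eqref{eq:weightv1}, the MM sandwich over the iteration-independent convex feasible set, monotonicity of $\Phi_\epsilon$ in $\epsilon$ combined with $\epsilon_n\le\epsilon_{n-1}$, compactness from the per-AP power cap, the $\Delta M\epsilon_n^{\tilde p}$ gap bound, and gradient matching at a fixed point --- is correct and essentially complete, up to the (acknowledged) care needed when passing $\epsilon_n\downarrow 0$ to reach a stationary point of the nonsmooth problem \eqref{Prob:TotalPowerOptLpv0}. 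Note only that this argument establishes monotone decrease of the smoothed values $\Phi_{\epsilon_n}$, hence decrease of the true objective of \eqref{Prob:TotalPowerOptLpv0} only up to an $O(\epsilon_n^{\tilde p})$ error per step; that is the honest version of the claim and you handle it correctly.

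Property~2 is where your proposal has a genuine gap, and you have correctly located it but not closed it. The observation that $a_m^{(n)}=\tfrac{\Delta\tilde p}{2}\epsilon_n^{\tilde p-2}$ is the largest, nondecreasing weight is a red herring: writing the subproblem in the variables $u_{mk}=\sqrt{\rho_{mk}}$, the block-$m$ contribution to the objective is $a_m^{(n)}\sum_k u_{mk}^2$, whose gradient \emph{vanishes} at $\mathbf{u}_m'=\mathbf{0}$ no matter how large $a_m^{(n)}$ is. Whether $\mathbf{u}_m'=\mathbf{0}$ remains optimal at iteration $n+1$ is therefore decided entirely by the first-order balance, inside the constraints of \eqref{Prob:TotalPowerOptv3}, between the signal terms $\sqrt{N\gamma_{mk}}\,u_{mk}$ (which enter $\mathbf{g}_{k\mathcal{A}_M}^T\mathbf{u}_{k\mathcal{A}_M}$ linearly with positive coefficients) and the coherent-interference terms, weighted by the Lagrange multipliers of the \emph{new} subproblem --- a condition in which $a_m^{(n)}$ does not appear and which can change between iterations because the other weights, and hence the multipliers, change. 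Your sketch concedes exactly this (``the block-$m$ stationarity relation at $\mathbf{u}_m'=0$ does not even involve $a_m$'') but then asserts the conclusion anyway; the step from ``the old iterate is feasible and does not charge block $m$'' to ``the new minimizer keeps block $m$ at zero'' does not follow, since a small positive $u_{mk}$ incurs only second-order objective cost while producing first-order constraint slack whenever $\gamma_{mk}>0$ and user $k$'s constraint is active. To close this you would need either a structural argument showing the multiplier-weighted first-order condition that forced $\mathbf{u}_m'=\mathbf{0}$ at iteration $n$ is preserved under reweighting, or to modify the scheme so that an AP is permanently deleted once its power vanishes --- which is what the paper effectively does downstream, but is not what the theorem asserts.
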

\begin{proof}
The proof is omitted due to the space limitations.
\end{proof}

Theorem~\ref{Theorem:Sparsity} guarantees a monotonically decreasing objective function and the main computational cost is to solve \eqref{Prob:TotalPowerOptv3} in each iteration. The iterative process reaches a stationary point to problem~\eqref{Prob:TotalPowerOptLpv0} where APs are turned off along the iterations.
The damping constant $\epsilon_n >0$ is introduced to cope with numerical issues that can appear when updating the weight values \eqref{eq:weightv1}. The stopping criterion can be selected by comparing two consecutive iterations.

Due to the norm-relaxation, we will not use the solution from Theorem~\ref{Theorem:Sparsity} as the final solution but instead as an indication of which APs to turn off. More precisely, we compute the transmit power that the APs utilize at the solution from Theorem~\ref{Theorem:Sparsity} and reorder the APs in increasing power order.  We then make a bisection-like search over how many APs should be turned on and always keep the ones that utilize the most power at the solution from Theorem~\ref{Theorem:Sparsity}. We thereby identify an AP set that minimizes the total power consumption.

\begin{figure}[t]
		\centering
		\includegraphics[trim=3.1cm 8.0cm 3.5cm 8.5cm, clip=true, width=3.2in]{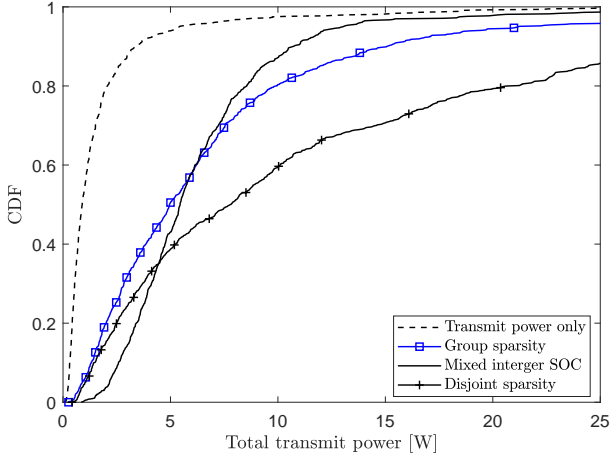} \vspace*{-0.25cm}
		\caption{The CDF of the total transmit power.}
		\label{FigTransmitPowerMRT}
		\vspace*{-0.4cm}
\end{figure}
\begin{figure}[t]
		\centering
		\includegraphics[trim=3.1cm 8.0cm 3.5cm 8.5cm, clip=true, width=3.2in]{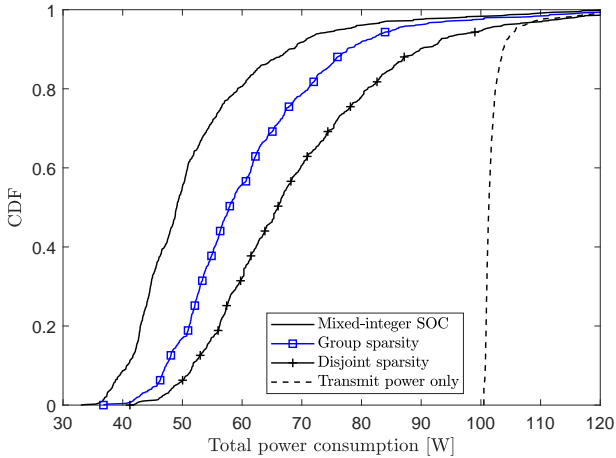} \vspace*{-0.25cm}
		\caption{The CDF of the total power consumption.}
		\label{FigTotalPowerMRT}
		\vspace*{-0.4cm}
\end{figure}

\section{Numerical Results} \label{Sec:NumRes}

We have simulated a setup where $M=20$ APs and $K=20$ users are randomly distributed within a squared area of $1$~km$^2$ by a uniform distribution, given that the distance between two APs should be larger than $50$\,m. Each AP is equipped with $N=20$ antennas. The requested SE from each user is $2$ [b/s/Hz]. We apply the wrap-around structure to get rid of edge effects and guarantee uniform simulation performance for the $M$ APs. Coherence intervals have $\tau_c = 200$ symbols. There are $\tau_p=5$ orthogonal pilots and each is assigned to $4$ randomly selected users. The pilots are transmitted with uplink power $p_k=0.2$~W. We use the large-scale fading model with correlated shadow fading from \cite{bjornson2019making}, which matches well with the 3GPP Urban Microcell model for a carrier frequency $2$~GHz. The maximum downlink power per AP is $P_{\max}=1$~W and the noise variance is $\sigma^2= -92$~dBm.
The power consumption is modeled similar to  \cite{ngo2018total}: $\Delta = 2.5$ and $P_{\mathrm{act}} = 5.03$~W.

We compare the following methods:
\begin{itemize}
\item[$(i)$] \textit{Transmit power only:} All APs are turned on and the total transmit power is minimized  as in \cite{ngo2018total, nguyen2017energy}.
\item[$(ii)$] \textit{Group sparsity:} This is the proposed sparsity-based method from Sec.~\ref{Sec:SubSols} with the norm $\tilde{p}/2 = 0.5$.
\item[$(iii)$] \textit{Disjoint sparsity:} This is a previous method from \cite{shi2016smoothed, luo2014downlink} where the AP subset and transmit power are optimized disjointly.
\item[$(iv)$] \textit{Mixed-integer SOC:} This is the optimal method from Sec.~\ref{subsec:optimal-solution}.
\end{itemize}

Fig.~\ref{FigTransmitPowerMRT} shows the cumulative distribution function (CDF) of the total transmit power [W] achieved by the four different methods, where the randomness is due to the different AP and user locations. When all APs are active, the transmit power is $1.8$~W on the average. The mixed-integer SOC program uses roughly $3.6\times$ more transmit power: $6.4$~W on the average. The proposed sparsity-based method utilizes about $7$~W, while the disjoint sparsity benchmark uses the highest transmit power level: about $11.8$~W.

The proposed methods are not minimizing the transmit power but the total transmit power. 
Fig.~\ref{FigTotalPowerMRT} shows the CDF of the total power consumption [W]. Contrary to the previous figure, the benchmark where all APs are active has the the highest total power consumption: about $102$~W on average. By solving the proposed mixed-integer SOC program, we find the globally minimum total power consumption, which saves about $49\%$ compared with the baseline. The proposed sparsity-based method requires around $17\%$ more power than the global minimum. In contrast, the previous disjoint sparsity benchmark requires $32\%$ more power than the global minimum.

\section{Conclusion} \label{Sec:Conclusion}
This paper has minimized the total downlink power consumption in cell-free Massive MIMO systems by jointly optimizing the number of active APs and their transmit powers, while satisfying the SEs requested by the users.
A globally optimal solution can be found by formulating the considered problem as a mixed-integer SOC program. We observe a considerable reduction in the total power consumption (about $50\%$) compared with only minimizing the total transmit powers as in previous work. Since the mixed-integer SOC program has high computational complexity, we also developed a lower-complexity algorithm that finds a good suboptimal solution with relatively low complexity by utilizing sparsity methods. In the simulation part, this method only requires 20\% more power than the global minimum.

\bibliographystyle{IEEEbib}
\bibliography{IEEEabrv,refs}

\begin{thebibliography}{10}

\bibitem{interdonato2019ubiquitous}
Giovanni Interdonato, Emil Bj{\"o}rnson, Hien~Quoc Ngo, P{\aa}l Frenger, and
  Erik~G Larsson,
\newblock ``Ubiquitous cell-free massive {MIMO} communications,''
\newblock {\em EURASIP J. Wireless Commun..Netw.}, vol. 2019, no. 1, pp. 197,
  2019.

\bibitem{Ngo2017a}
Hien~Quoc Ngo, Alexei Ashikhmin, Hong Yang, Erik~G Larsson, and Thomas~L
  Marzetta,
\newblock ``Cell-free massive {MIMO} versus small cells,''
\newblock {\em IEEE Trans. Wireless Commun.}, vol. 16, no. 3, pp. 1834--1850,
  Mar. 2017.

\bibitem{Nayebi2017a}
Elina Nayebi, Alexei Ashikhmin, Thomas~L. Marzetta, Hong Yang, and Bhaskar~D.
  Rao,
\newblock ``Precoding and power optimization in cell-free {Massive} {MIMO}
  systems,''
\newblock {\em {IEEE} Trans. Wireless Commun.}, vol. 16, no. 7, 2017.

\bibitem{Nayebi2016a}
Elina Nayebi, Alexei Ashikhmin, Thomas~L Marzetta, and Bhaskar~D Rao,
\newblock ``Performance of cell-free massive {MIMO} systems with {MMSE} and
  {LSFD} receivers,''
\newblock in {\em Asilomar Conference on Signals, Systems and Computers}, 2016.

\bibitem{bjornson2019making}
Emil Bj{\"o}rnson and Luca Sanguinetti,
\newblock ``Making cell-free {M}assive {MIMO} competitive with {MMSE}
  processing and centralized implementation,''
\newblock {\em {IEEE} Trans. Wireless Commun.}, 2019,
\newblock to appear. Available: https://arxiv.org/abs/1903.10611.

\bibitem{ngo2018total}
Hien~Quoc Ngo, Le-Nam Tran, Trung~Q Duong, Michail Matthaiou, and Erik~G
  Larsson,
\newblock ``On the total energy efficiency of cell-free {M}assive {MIMO},''
\newblock {\em IEEE Transactions on Green Communications and Networking}, vol.
  2, no. 1, pp. 25--39, 2018.

\bibitem{bashar2018enhanced}
Manijeh Bashar, Kanapathippillai Cumanan, Alister~G Burr, M{\'e}rouane Debbah,
  and Hien~Quoc Ngo,
\newblock ``Enhanced max-min {SINR} for uplink cell-free {M}assive {MIMO}
  systems,''
\newblock {\em arXiv preprint arXiv:1801.10188}, 2018.

\bibitem{nguyen2017energy}
Long~D Nguyen, Trung~Q Duong, Hien~Quoc Ngo, and Kamel Tourki,
\newblock ``Energy efficiency in cell-free {M}assive {MIMO} with zero-forcing
  precoding design,''
\newblock {\em {IEEE} Commun. Lett.}, vol. 21, no. 8, pp. 1871--1874, 2017.

\bibitem{feng2017boost}
Mingjie Feng, Shiwen Mao, and Tao Jiang,
\newblock ``{BOOST}: {B}ase station on-off switching strategy for green massive
  {MIMO} {HetNets},''
\newblock {\em {IEEE} Trans. Wireless Commun.}, vol. 16, no. 11, pp.
  7319--7332, 2017.

\bibitem{vu2015energy}
Quang-Doanh Vu, Le-Nam Tran, Markku Juntti, and Een-Kee Hong,
\newblock ``Energy-efficient bandwidth and power allocation for multi-homing
  networks,''
\newblock {\em {IEEE} Trans. Signal Process.}, vol. 63, no. 7, pp. 1684--1699,
  2015.

\bibitem{han2016survey}
Fengxia Han, Shengjie Zhao, Lu~Zhang, and Jinsong Wu,
\newblock ``Survey of strategies for switching off base stations in
  heterogeneous networks for greener 5{G} systems,''
\newblock {\em {IEEE} Access}, vol. 4, pp. 4959--4973, 2016.

\bibitem{massivemimobook}
Emil Bj\"{o}rnson, Jakob Hoydis, and Luca Sanguinetti,
\newblock ``Massive {MIMO} networks: {Spectral}, energy, and hardware
  efficiency,''
\newblock {\em Foundations and Trends{\textregistered} in Signal Processing},
  vol. 11, no. 3-4, pp. 154--655, 2017.

\bibitem{Chien2016b}
Trinh~Van Chien, Emil Bj{\"o}rnson, and Erik~G. Larsson,
\newblock ``Joint power allocation and user association optimization for
  {M}assive {MIMO} systems,''
\newblock {\em {IEEE} Trans. Wireless Commun.}, vol. 15, no. 9, pp. 6384 --
  6399, 2016.

\bibitem{interdonato2018}
Giovanni Interdonato, Marcus Karlsson, Emil Bj{\"o}rnson, and Erik~G Larsson,
\newblock ``Downlink spectral efficiency of cell-free {M}assive {MIMO} with
  full-pilot zero-forcing,''
\newblock in {\em Proc.~IEEE GlobalSIP}, 2018.

\bibitem{cvx2015}
{CVX Research Inc.},
\newblock ``{CVX}: Matlab software for disciplined convex programming, academic
  users,'' \url{http://cvxr.com/cvx}, 2015.

\bibitem{Mosek}
Erling~D. Andersen, Bo~Jensen, Jens Jensen, Rune Sandvik, and Ulf Wors{\o}e,
\newblock ``{MOSEK} version 6,'' 2009,
\newblock {T}echnical Report TRC2009C3, MOSEK.

\bibitem{zhang2015survey}
Zheng Zhang, Yong Xu, Jian Yang, Xuelong Li, and David Zhang,
\newblock ``A survey of sparse representation: algorithms and applications,''
\newblock {\em {IEEE} Access}, vol. 3, pp. 490--530, 2015.

\bibitem{candes2008enhancing}
Emmanuel~J Candes, Michael~B Wakin, and Stephen~P Boyd,
\newblock ``Enhancing sparsity by reweighted $l_1$ minimization,''
\newblock {\em Journal of Fourier analysis and applications}, vol. 14, no. 5-6,
  pp. 877--905, 2008.

\bibitem{luo2014downlink}
Shixin Luo, Rui Zhang, and Teng~Joon Lim,
\newblock ``Downlink and uplink energy minimization through user association
  and beamforming in {C-RAN},''
\newblock {\em {IEEE} Trans. Wireless Commun.}, vol. 14, no. 1, pp. 494--508,
  2014.

\bibitem{ba2014}
Demba Ba, Behtash Babadi, Patrick~L Purdon, and Emery~N Brown,
\newblock ``Convergence and stability of iteratively re-weighted least squares
  algorithms,''
\newblock {\em {IEEE} Trans. Signal Process.}, vol. 62, no. 1, pp. 183--195,
  2014.

\bibitem{shi2016smoothed}
Yuanming Shi, Jinkun Cheng, Jun Zhang, Bo~Bai, Wei Chen, and Khaled~B Letaief,
\newblock ``Smoothed $\ell_p$-minimization for green cloud-{RAN} with user
  admission control,''
\newblock {\em {IEEE} J. Sel. Areas Commun.}, vol. 34, no. 4, pp. 1022--1036,
  2016.

\end{thebibliography}

\end{document}